\newtheorem{theorem}{Theorem}
\newtheorem{lemma}[theorem]{Lemma}
\theoremstyle{definition}
\def\BibTeX{{\rm B\kern-.05em{\sc i\kern-.025em b}\kern-.08em
    T\kern-.1667em\lower.7ex\hbox{E}\kern-.125emX}}
\title{Fault-tolerant Network Design for Bounded Delay Data Transfer from PMUs to Control Center}
\author{asen }
\author{\IEEEauthorblockN{A. Sen, C. Sumnicht, S. Adeniye, D. Patel, S. Choudhuri}
\IEEEauthorblockA{\textit{School of Computing, Informatics and Decision Systems Engg.} \\
\textit{Arizona State University}\\
Tempe, AZ 85287, USA \\
\{asen, sadeniye, schoud13\}@asu.edu}}
\date{August 2020}
\begin{document}

\maketitle
\begin{abstract}

Communication network design for monitoring the state of an electric power grid has received significant attention in recent years.
In order to measure 
stability of a power grid, it is imperative that measurement data collected by the Phasor Measurement Units (PMUs) located at the Sub-Stations (SS) must arrive at the Control Center (CC) within a specified delay threshold $\delta$. In earlier papers we formalized the design problem as the \textit{Rooted Delay Constrained Minimum Spanning Tree (RDCMST)} problem. However, RDCMST does not provide any fault-tolerance capability, as failure of just one communication link would prevent PMU data from reaching the CC. In this paper, we
study the optimal cost network design problem with fault tolerance capability. In our model the PMU data from the SSs will have a path to reach the CC in spite of the failure of at most $R$ links 
within the 
delay threshold $\delta$. If $R = 1$, each SS will have two link disjoint paths of length at most $\delta$ to the CC. In other words, each SS will be on a cycle with the CC.
We refer to this problem as the \textit{Rooted Delay Constrained Minimum Cost Cycle Cover (RDCMCCC)} problem. We provide computational complexity analysis, an Integer Linear Programming formulation to find the optimal solution and a heuristic based on the sweeping technique. We evaluate the performance of our heuristic with real substation location data of Arizona.
\end{abstract}

\section{Introduction}
Accurate estimation of the operational state of the electric power grid is vital for its stable operation. For state estimation of the grid, power transmission control data, generated by the Phasor Measurement Units (PMUs) located in some of the Substations (SS), has to arrive at the  
Controls Center(s) (CC) within an acceptable delay threshold $\delta$. Accordingly, communication network topology design for 
electric power grid has received significant attention in recent times
\cite{ye2020multiple,zhu2018optimal,mohammadi2015new}.
    The authors in \cite{ye2020multiple} study the communication network design problem for the power grid to 
    analyze possible topologies,  
    and provide 
    guidance to power utilities for choosing the most appropriate communication network for their PMU based real-time applications. 
    
Many researchers have studied the optimal PMU placement problem with specific objectives, such as, full network observability \cite{baldwin1993power}.
However, optimal PMU placement without taking into account the cost of Communication Infrastructure (CI), may lead to the design of a very expensive Wide Area Monitoring (WAM) system, where the cost of CI may dominate the cost of the PMUs. Mohammadi {\em et.al.} \cite{mohammadi2015new} 
studied the
joint optimization of PMU placement and associated CI cost. This line of research is continued in \cite{zhu2018optimal}, where the authors study the optimal
PMU-communication link placement (OPLP) problem that simultaneously considers the placement of PMUs and communication links for
full observability.
    
    
Motivated by the studies undertaken by the power engineers \cite{ye2020multiple,zhu2018optimal,mohammadi2015new}, our earlier papers  formalized the design problem 
as the {\em Rooted Delay Constrained Minimum Spanning Tree (RDCMST) problem} \cite{sen2021optimal,sen2023delay}. However, this design does not provide any fault-tolerance capability: failure of just one communication link will prevent the PMU data from some of the substations from reaching the CC. In this paper, we overcome this limitation by studying the optimal cost network design problem with {\em fault tolerance capability}. That is, PMU data from the substations will have at least one path to reach the CC in spite of the failure of at most $R$ links, within 
the acceptable {\em delay threshold}. 
Although our formulation allows for the design of a $R$-fault-tolerant network for any $R$, our primary focus in this paper is the case where $R = 1$.
When $R = 1$, the problem may be viewed as a variation of the well-studied {\em cycle cover} and {\em vehicle routing} problems \cite{Blser2001ComputingCC, nagarajan2012approximation}. We define the problem as a {\em Rooted Delay Constrained Minimum Cost Cycle Cover (RDCMCCC) problem}.

The contributions of this paper are as follows:

\begin{itemize}

\item We introduce the delay bounded optimal cost fault-tolerant PMU to CC data transfer network design problem.
\item We prove that the RDCMCCC problem is NP-complete and provide an approximation algorithm for it.
\item We provide an Integer Linear Program for finding the optimal solution for the RDCMCCC problem.
\item We show that a variation of the RDCMCCC problem can be solved in polynomial time.
\item We provide a Sweeping Technique based Heuristic solution for the RDCMCCC problem.
\item We conduct extensive experimental evaluation of our heuristic with real Arizona substation data.


\end{itemize}


\section{Problem Formulation}
\label{sec2}

We formalize the RDCMCCC problem by viewing the locations of the substations as a set $P$ of $n$ points in the Euclidean plane. The input to the problems includes 
a distinguished point, $p_0 \in P$, corresponding to the location of the CC, and an acceptable delay threshold parameter $\delta$. We construct a {\em weighted, complete} graph $G = (V, E)$, where each node $v_i \in V$ corresponds to a point $p_i \in P$. Since the node $v_i$ and the point $p_i$ have a {\em one-to-one} correspondence, we use the terms {\em node} and {\em point} interchangeably. The weight $w(v_i, v_j)$ of an edge $(v_i, v_j) \in E$, is the {\em Euclidean distance} between the corresponding points $p_i, p_j \in P$. We assume that $w(v_i, v_j)$ represents both the {\em cost} and the {\em delay} of the link $(v_i, v_j)$. We note that this is a simplified assumption, as in a realistic communication system, the {\em link length} is only one of the factors that will determine the cost and the delay associated with that link. The cost of a link may be determined by factors such as wired/wireless, optical fiber/coaxial cable, etc, whereas the delay on a link may be determined by factors such as transmission rate, link bandwidth, queuing delay, etc. The consideration of only the link length (which determines the propagation delay) as the only factor makes the model somewhat simplistic. However, this simplified model is widely used in literature for the evaluation of communications networks \cite{xue2003minimum}.

\vspace{0.1in}
\noindent
{\em Rooted Delay Constrained Minimum Cost $(R = 1)$ Fault Tolerant Network Design Problem (RDCMC-(R=1)-FTNDP)}:

\vspace{0.05in}
\noindent
Given a directed complete graph $G = (V, E), V = \{v_0, \ldots, v_n\}, E = \{e_{i, j}, (v_i \rightarrow v_j), 1 \leq i, j \leq n, i \neq j\}$, a cost value $C_{i, j}$ and a delay value $D(i, j)$ associated with each directed edge $e_{i, j}$ , a distinguished vertex $v_0 \in V$ (referred to as the {\em root} node), and a delay bound parameter $\delta$. The goal of the RDCMC-(R=1)-FTNDP) problem is to find the {\em least cost subgraph} $G' = (V, E')$ 
of $G = (V, E)$, such that every node $v_i \in V, i \neq 0$ has at least two edge disjoint paths to $v_0$, and the length of each path 
is at most $\delta$. The {\em cost} of the subgraph $G'$ is equal to the sum of the cost of all the edges in $E'$. Similarly, the delay of the path $P_{v_i, v_0}, \forall i, 1 \leq i \leq n$ from node $v_i$ to $v_0$ is equal to the sum of the delays of all the edges  $e_{i, j} \in P_{v_i, v_0}$.

Note that in RDCMC-(R = 1)-FTNDP 
the directed edge $e_{i, j}, (v_i \rightarrow v_j)$ may be a part of the several paths, say, from $v_a$ to $v_0$, $v_b$ to $v_0$, $v_c$ to $v_0$ etc. 
In this environment, as the cost $C_{i, j}$ is associated with each edge $e_{i, j}$. 
if the edge  $e_{i, j}$ is a part of three paths, $v_a$ to $v_0$,  $v_b$ to $v_0$, $v_c$ to $v_0$, unless one is careful, the cost $C_{i, j}$ may be counted three times. However, in our model we view that this cost should not be counted three times, as an optical fiber line connecting the nodes $v_i$ and $v_j$ will be able to carry network traffic for multiple paths. Moreover, in our model if both the directed edges $e_{i, j}, (v_i \rightarrow v_j)$ and $e_{j, i}, (v_j \rightarrow v_i)$ are used for path construction, the two costs $C_{i, j}$ and $C_{j, i}$, associated with edges $e_{i, j}$ and $e_{j, i}$ respectively, should not be counted as two separate costs. In such a situation the cost of the link connecting the nodes $v_i$ and $v_j$ should be counted only one time, for the same reason given earlier. 

In the mathematical domain, it implies that once the solution of the RDCMC-R-FTNDP is found, where all the edges are directed, the orientation of the edges may be ignored and 
the resulting graph be treated 
as an undirected graph. Since in this undirected graph version of the solution, each node $v_i, 1 \leq i \leq n$, will have two undirected paths to the node $v_0$ of length at most $\delta$. In other words, each node $v_i, 1 \leq i \leq n$ will be on at least one cycle with the node $v_0$ where cycle length is at most $2\delta$.


From this perspective the RDCMC-(R = 1)-FTNDP is similar to the well-studied 
{\em Cycle Cover} and {\em Vehicle Routing} problems \cite{Blser2001ComputingCC, nagarajan2012approximation}. However, important differences remain, primarily in the way the cost of the design is measured. To the best of our knowledge this version has not been studied either in Cycle Cover or Vehicle Routing domain. Because of the similarity of the undirected version solution of the RDCMC-(F = 1)-FTNDP, we refer to it as the 
{\em Rooted Delay Constrained Minimum Cost Cycle Cover (RDCMCCC) Problem}.

\vspace{0.05in}
\noindent
{\em Instance:} A complete undirected graph $G = (V, E)$, with  $V = \{v_0, \ldots, v_n\}$ and a cost $C_{i, j}$ and delay $D_{i, j}$ associated with each edge $e_{i, j} \in E$, a distinguished node $v_0$, a delay threshold parameter $\delta$ and a cost threshold parameter $\gamma$.\\
{\em Question:} Is there a subgraph $G'$ of $G$ with the following properties: (i) every node $v_i, 1 \leq i \leq n$ is on a cycle with the node $v_0$, (ii) the total delay on each cycle is at most $\delta$ (the total delay on a cycle is equal to the sum of the delays associated with each of the edges constituting the cycle), and (iii) total cost of the edges of $G'$ is at most $\gamma$?

\section{Computational Complexity of RDCMCCC}
\label{CC_RDCMCCC}

\begin{figure*}[htp]
\centering
\includegraphics[width=.25\textwidth]{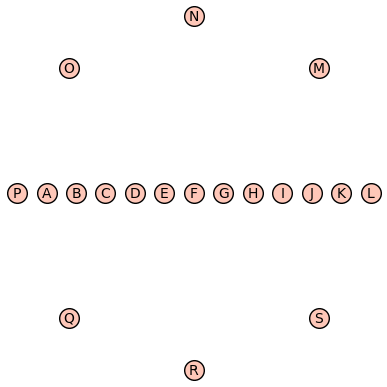}\hfill
\includegraphics[width=.25\textwidth]{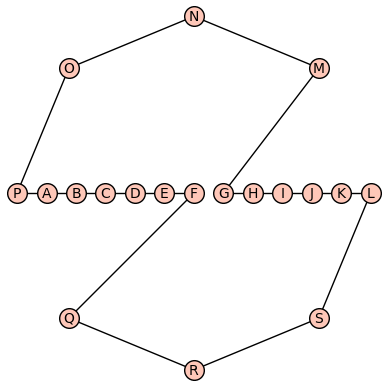}\hfill
\includegraphics[width=.25\textwidth]{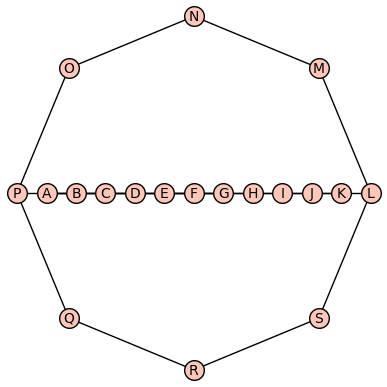}
\caption{(i) Set of points in the plane, (ii) TSP (One Cycle) Solution, (iii) Two Cycle Solution}
\label{fig:figure3}
\end{figure*}

\begin{theorem}
    RDCMCCC Problem is NP-complete.
\end{theorem}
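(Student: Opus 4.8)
The plan is to establish both membership in NP and NP-hardness. For membership, I would take as a certificate a collection of cycles $\mathcal{C} = \{C_1, \dots, C_k\}$, each passing through the root $v_0$, together with an assignment of every non-root node to some cycle containing it. Given $\mathcal{C}$ one verifies in polynomial time that (a) each $C_j$ is a cycle of $G$ through $v_0$ whose total delay $\sum_{e \in C_j} D_e$ is at most $\delta$, (b) every node $v_i$, $1 \le i \le n$, lies on at least one $C_j$, and (c) the cost of the edge set $E' = \bigcup_j E(C_j)$, counting each edge once, is at most $\gamma$. Hence RDCMCCC is in NP.

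For hardness I would reduce from the \emph{Hamiltonian Cycle} problem. Given an undirected graph $H = (V_H, E_H)$ with $|V_H| = n$, I build a complete graph $G$ on the same vertex set, designate an arbitrary vertex as the root $v_0$, and set $C_{i,j} = 1$ and $D_{i,j} = 1$ whenever $\{v_i, v_j\} \in E_H$, while $C_{i,j} = 2$ and $D_{i,j} = 1$ otherwise. I then set the delay bound $\delta = n$ and the cost bound $\gamma = n$. The construction is clearly polynomial, and since the cost function takes values in $\{1,2\}$ it is in fact a metric, so the hardness holds even in the metric cost model underlying the paper. Note the delay bound is deliberately slack: this reduction shows that the cost-minimization aspect alone already makes the problem hard.

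The correctness argument has two directions. For the forward direction, a Hamiltonian cycle of $H$ is a single cycle through all $n$ vertices (including $v_0$) using only $E_H$-edges; its cost is $n = \gamma$ and its delay is $n = \delta$, so it is a feasible RDCMCCC solution. For the converse, suppose $G'$ is feasible with cost at most $n$. Condition (i) requires every node to have two edge-disjoint paths to $v_0$, i.e. to be $2$-edge-connected to $v_0$; since $2$-edge-connectivity is an equivalence relation, all vertices lie in one class and $G'$ is a $2$-edge-connected spanning subgraph. Therefore every vertex has degree at least $2$, so $|E'| \ge n$, and as each edge costs at least $1$ we get $\mathrm{cost}(G') \ge |E'| \ge n$. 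Equality forces $|E'| = n$ with every edge of cost $1$ (hence in $E_H$) and every degree exactly $2$; a connected $2$-regular graph is a single spanning cycle, i.e. a Hamiltonian cycle of $H$, whose delay $n \le \delta$ is satisfied. Thus $H$ is Hamiltonian if and only if the RDCMCCC instance is a YES-instance.

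The main obstacle is the converse direction — converting condition (i) into a rigid structural statement. The crux is the chain ``every node shares a cycle with $v_0$'' $\Rightarrow$ $2$-edge-connected spanning subgraph $\Rightarrow$ at least $n$ edges $\Rightarrow$ (at cost at most $n$) a single Hamiltonian cycle, together with checking that the delay cap neither excludes the intended Hamiltonian solution (handled by $\delta = n$) nor admits a spurious cheaper multi-cycle solution (ruled out by the edge-count lower bound). If one instead wanted the delay bound itself to be the essential source of hardness, the natural alternative would be a reduction from a partition or bin-packing style problem, using the per-cycle delay cap to force a prescribed grouping of the nodes into bounded-length cycles.
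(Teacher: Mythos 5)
Your proof is correct, and the reduction itself is the same as the paper's: from Hamiltonian Cycle, build a complete graph with cost $1$ on $E_H$-edges and cost $2$ on non-edges, set $\gamma = n$, and render the delay bound slack (your $\delta = n$ with unit delays is equivalent to the paper's ``take $\delta$ large,'' since no simple cycle can use more than $n$ edges). Where you genuinely differ is in how you rule out cheap multi-cycle solutions in the converse direction. The paper proves a standalone lemma: any cover by $m > 1$ cycles all passing through $v_1$ has strictly more than $|V|$ edges, because every vertex has degree at least $2$ and, the paper asserts, the pairwise-intersecting cycles force some vertex of degree at least $3$ --- an assertion it does not actually justify (one must argue that two distinct cycles sharing $v_1$ produce, in their edge union, a vertex of $C_1$ incident to an edge of $C_2 \setminus C_1$). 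You instead avoid any strict inequality: every vertex on a cycle with $v_0$ has degree at least $2$, so $|E'| \ge n$ and $\mathrm{cost}(G') \ge n$; tightness then forces a connected $2$-regular spanning subgraph on cost-$1$ edges, i.e.\ a Hamiltonian cycle of $H$. This equality-case analysis is cleaner and more robust than the paper's lemma (it needs only the degree bound plus connectivity, both immediate), and it subsumes the multi-cycle exclusion rather than treating it separately. Your NP-membership argument is also more complete than the paper's, which sketches verification of coverage and cost but omits checking the per-cycle delay; your certificate explicitly verifies all three conditions. One cosmetic note: the appeal to $2$-edge-connectivity being an equivalence relation is fine (it follows from $\lambda(u,w) \ge \min(\lambda(u,v), \lambda(v,w))$) but is heavier than needed --- ``every vertex lies on a cycle through $v_0$'' already gives minimum degree $2$ and connectivity directly, which is all your counting uses.
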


We illustrate this by giving a transformation from the Hamiltonian Cycle (HC) problem \cite{garey1997computers}. 
We first establish the following lemma.  A collection of cycles $\mathcal{C} = \{C_1, C_2, \ldots C_m\}$ with $|\mathcal{C}| > 1$ is said to ``cover'' a graph $G$ if every node in $G$ is part of at least one cycle in $\mathcal{C}$.

\begin{lemma}\label{cyclelem}
    Consider a collection of cycles $\mathcal{C} = \{C_1, C_2, \ldots C_m, m > 1\}$ that covers a connected graph $G = (V, E)$ and each $C_i \in \mathcal{C}$ includes the vertex $v_1 \in V$. 
    This implies $V(C_i) \cap V(C_j) \neq \varnothing$ for all $i \neq j$.  If $E\left(\bigcup_{C\in\mathcal{C}}C\right)$ denotes the union of all the edges in the cycles belonging to $\mathcal{C}$, then $|E\left(\bigcup_{C\in\mathcal{C}}C\right)| > |V|$.
\end{lemma}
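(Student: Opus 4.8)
The plan is to study the single subgraph $H = \bigcup_{C\in\mathcal{C}} C$ obtained by overlaying all the cycles of $\mathcal{C}$, and to bound $|E(H)|$ from below using the edge--vertex--cycle relationship for a connected graph. First I would record two structural facts. Because $\mathcal{C}$ covers $G$, every vertex lies on some cycle, so $H$ is spanning and $V(H) = V$. Because every $C_i$ passes through the common vertex $v_1$ and each $C_i$ is connected, $H$ is a union of connected subgraphs sharing a vertex, hence connected. (The asserted fact $V(C_i)\cap V(C_j)\neq\varnothing$ is immediate from all cycles containing $v_1$, and is exactly what is needed for this connectivity.)

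Next I would invoke the elementary bound that a connected graph on $|V|$ vertices has at least $|V|-1$ edges, with equality precisely when it is a tree. Since $H$ contains the cycle $C_1$ it is not acyclic, so $|E(H)| \geq |V|$. To sharpen this to the strict inequality I would argue by contradiction: if $|E(H)| = |V|$, then $H$ is connected with equally many edges and vertices, i.e.\ unicyclic (a spanning tree plus one extra edge). Such a graph contains \emph{exactly one} cycle, contradicting the hypothesis $m = |\mathcal{C}| > 1$, which supplies two distinct cycles of $\mathcal{C}$ both sitting inside $H$. Therefore $|E(H)| \geq |V|+1 > |V|$.

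The step needing the most care, and the main obstacle, is justifying that a connected graph with $|E(H)| = |V|$ has only one cycle, so that the presence of two distinct cycles forces at least one more edge. I would prove this by fixing a spanning tree $T \subseteq H$ on $|V|-1$ edges; the one remaining edge $e$ determines a unique fundamental cycle, and since $T$ is acyclic every cycle of $H$ must contain $e$, so no second cycle can exist. Equivalently, this is the statement that the circuit rank $|E(H)| - |V(H)| + 1$ of a connected graph is the dimension of its cycle space, which is at least $2$ as soon as two distinct cycles are present. Combining this with the spanning and connectivity facts established first yields $|E(H)| > |V|$.
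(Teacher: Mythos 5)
Your proof is correct, but it takes a genuinely different route from the paper's. The paper argues by degree counting on $H=\bigcup_{C\in\mathcal{C}}C$: every vertex lies on a cycle, so every degree is at least $2$; since the cycles all meet at $v_1$ yet are distinct, some vertex has degree at least $3$; the handshake identity then gives $2|E(H)|\geq 2(|V|-1)+3=2|V|+1$, and integrality bumps $|E(H)|\geq |V|+\tfrac12$ up to $|E(H)|\geq |V|+1$. You instead argue via connectivity and circuit rank: $H$ is spanning and connected (all cycles share $v_1$), so $|E(H)|\geq|V|-1$; it is not a forest, so $|E(H)|\geq|V|$; and equality would make $H$ unicyclic, which your spanning-tree/fundamental-cycle argument correctly shows has exactly one cycle, contradicting $m>1$ distinct cycles. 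Notably, the two proofs spend the hypotheses in different places: the paper uses the common vertex $v_1$ only to produce a degree-$3$ vertex — a step it asserts rather tersely, since justifying it fully requires observing that a connected graph with all degrees exactly $2$ is a single cycle, which would force $C_i=C_j$ — whereas you use $v_1$ to get connectivity of $H$, after which the counting is automatic and the distinctness of the cycles does the remaining work explicitly. Your route buys robustness and generality: the cycle-space formulation (two distinct cycles are independent over $\mathrm{GF}(2)$, so $|E(H)|-|V|+1\geq 2$) extends immediately to unions with $c$ components via $|E(H)|\geq|V|+2-c$, and it makes precise exactly where $m>1$ is needed. The paper's degree-sum argument buys elementarity — no spanning tree or cycle-space machinery — at the cost of the loosely justified degree-$3$ claim, which your proof cleanly circumvents.
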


\begin{proof}
    As every vertex belongs to a cycle, every vertex must have a degree of at least two. Furthermore, as none of the cycles are disjoint, there must be at least one vertex with a degree strictly greater than $2$. Hence, by the degree sum formula
    \begin{align*}
        \sum_{v \in V} d(v) &\geq 2(|V|-1) + 3 \\
                            &= 2|V| + 1 \\
        \sum_{v \in V} d(v) &\geq 2|V| + 1
    \end{align*}
    On the other hand,
    \begin{align*}
        \sum_{v \in V} d(v) = 2\left|E\left(\bigcup_{C\in\mathcal{C}}C\right)\right|
    \end{align*}
    Therefore,
    \begin{align*}
        2\left|E\left(\bigcup_{C\in\mathcal{C}}C\right)\right| &\geq 2|V| + 1 \\
        \left|E\left(\bigcup_{C\in\mathcal{C}}C\right)\right| &\geq |V| + 1/2
    \end{align*}
    So, in fact,
    \begin{align*}
        \left|E\left(\bigcup_{C\in\mathcal{C}}C\right)\right| &\geq |V| + 1
    \end{align*}
\end{proof}

Observe $\left|E\left(\bigcup_{C\in\mathcal{C}}C\right)\right| = |V|$ happens when (i) $m = 1$, i.e., when the graph has a Hamiltonian Cycle or the cycles are all disjoint.

\noindent
{\em Proof of Theorem 1}: We restrict our attention to those instances of the RDCMCC problem for which the delay parameter $\delta$ is large, and as such, all solutions will satisfy the delay requirement. Thus, the only parameter that we need to pay attention to is the cost parameter $\gamma$.

Clearly, RDCMCC is in NP, as a Turing Machine can easily verify whether the cost of the subgraph $G'$ given as the output of the RDCMCC problem covers all the nodes and the cost of the edges in $G'$ doesn't exceed $\gamma$ \cite{garey1997computers}.

From an instance of the HC problem, we create an instance of the RDCMCC problem and establish that the input graph of the HC problem has a Hamiltonian Cycle if and only if the instance of the RDCMCC has a subgraph that covers all the nodes at a cost that doesn't exceed $\gamma$. 

Consider an instance $G = (V, E)$ of the HC problem. Construct an instance of the RDCMCC problem as follows: (i) create a complete edge-weighted undirected graph $G' = (V', E')$, where $V' = V$, (ii) if two nodes $v_i$ and $v_j$ have an edge between them in $G$, set $w(v_i, v_j) = 1$ in $G'$, (iii) if two nodes $v_i$ and $v_j$ do not have an edge between them in $G$, set $w(v_i, v_j) = 2$ in $G'$, (iv) set the cost parameter $\gamma = |V|$.

If $G$ has a Hamiltonian Cycle, clearly $G'$ has a subgraph that covers all the nodes in $G'$ in one cycle (including the node $v_1$), and the cost of this cycle is $\gamma = n$. Furthermore, if $G'$ has a subgraph that covers all the nodes in $G'$ using cycle(s) (each one which includes the node $v_1$) and the cost of this cycle is $\gamma = n$, then it must be done with only one cycle since if multiple cycles were used, we know from Lemma \ref{cyclelem} that the number of edges belonging to these cycles would have exceeded $|V| = n$ and would have exceeded the cost as the parameter $\gamma$ is set equal to $n$.

\subsection{Approximation Algorithm for the RDCMCCC Problem}

In this subsection we provide an approximation algorithm for the RDCMCCC problem arising out of the Geometric Space. By Geometric Space, we mean that each node $v_i$ of the RDCMCCC problem corresponds to a point $p_i$ in a metric space with distance function $d(\cdot, \cdot)$, and $C_{ij} = d(p_i, p_j)$.
We utilize the Christofides' approximation algorithm designed for the Traveling Salesman Problem (TSP) \cite{garey1997computers} to design the approximation algorithm for the RDCMCCC Problem in the geometric setting. 
Christofides' approximation algorithm 
provides a 1.5-factor guarantee in comparison with the optimal solution of the TSP \cite{garey1997computers}. 
We use the following notations to denote the optimal and approximate solutions to various problems on a graph $G$ with weights/costs associated with each edge.

\noindent
$MST$: Optimal Solution for the Minimum Spanning Tree problem for graph $G = (V, E)$\\
\noindent
$Chris~\_APP_{TSP}$: Christofides's Approximate Solution for the TSP problem for graph $G = (V, E)$\\
\noindent
$OPT_{RDCMCCC}$: Optimal Solution for the RDCMCCC problem for graph $G = (V, E)$\\
\noindent
$APP_{RDCMCCC}$: Approximate Solution for the RDCMCCC problem for graph $G = (V, E)$

The following inequalities hold between the above terms:
\begin{equation}
OPT_{RDCMCCC} > MST
\end{equation} 
Christofides' approximation algorithm for the TSP produces a solution that is at most 1.5 times the cost of the MST \cite{garey1997computers}. This approximation algorithm for the TSP can be used as an approximation algorithm for the RDCMCCC problem. 
\begin{equation}
APP_{RDCMCCC} = Chris~\_APP_{TSP} \leq 1.5 \times MST 
\end{equation}
From equations (1) and (2), we can conclude 

\vspace{-0.05in}


\begin{equation}
APP_{RDCMCCC} \leq 1.5 \times OPT_{RDCMCCC} 
\end{equation}


We make the following interesting observation regarding the RDCMCCC problem.

\vspace{0.1in}
\noindent
{\em Observation 1:} A multi-cycle solution may provide a lower cost design than a single-cycle solution.

Consider the following arrangement of points in Figure \ref{fig:figure3}.
Formally, the points corresponding to the vertices of the graph are described by $V = S_1 \cup S_2$ where:
\begin{align*}
    S_1 &:= \{(x/6, 0) : x \in [-5, 6]\} \\
    S_2 &:= \{(\cos(2\pi\theta/8), \sin(2\pi\theta/8)) : \theta \in [0, 8] \}
\end{align*}

Also, define the edge set to be $E = \{ (v_1, v_2, d(v_1, v_2)) : v_1, v_2 \in V\}$ where $d(\cdot, \cdot)$ is the standard Euclidean distance. The optimal Traveling Salesman tour (i.e., a 1-cycle solution to the RDCMCCC problem) may be computed with SageMath \cite{SageMath} 
which provides a solution with a tour length of $8.32$. However, a 2-cycle solution to the same set of input points is bounded above by $2r + 2\pi r = 2 + 2\pi \approx 8.28$. 

\section{Integer Linear Program for RDCMCCC}
\label{ILPSection}

We model the power grid as a graph $G = (\{v_0\} \cup V, E)$ where the node $v_0$ corresponds to the CC, the nodes $V = \{v_1, \ldots, v_n\}$ correspond to the SSs, and the edges in $E$ correspond to communication links (e.g., fiber optic lines). Let $|E| = m$. We view the physical location of the CC, as well as all the substations, as points in a two-dimensional plane. In the graph $G$, each node corresponds to a point in the plane, and the weight $w_{i,j}$ on the edge $e_{i,j}$, is taken to be the Euclidean distance between the corresponding points on the plane. The weight $w_{i,j}$ is taken to be a measure of both (i) the cost $C_{i,j}$  of the edge $e_{i,j}$ and also (ii) the delay $D_{i,j}$ on the edge $d_{i,j}$. We assume that the PMUs are placed only in a subset of the substations and denote this subset as $U \subseteq V$, $U = \{u_1, \ldots, u_s\}$. In some grid environments, $U$ may be equal to $V$, in which case $s = n$ and $\forall i, 1 \leq i \leq n, u_i = v_i$. As PMU data has to flow from the PMUs at the substations (nodes $u_1$ through $u_s$) to the CC (node $v_0$) within the specified delay threshold $\delta$, there will be $s$ flows (or paths) from the nodes $u_i, 1 \leq i \leq s$ to the node $v_0$, such that the length of each one of these paths must be upper bounded by $\delta$. 




The following points should be noted:

\begin{itemize}

\item In order to have a $R$-fault-tolerant communication network, each node $u_i, 1 \leq i \leq s$ must have $R + 1$ {\em edge disjoint} flows (or paths) from $u_i$ to $v_0$. Thus, the total number of paths will be $s \times (R + 1)$.
\item We use the notation $F^{k, r}$ to indicate the $r-$th flow (or path) from node $u_k$ to node $v_0$, $1 \leq k \leq s, 1 \leq r \leq R+1$.
\item  We use a binary variable $f^{k, r}_{i,j}$ defined as follows:
\[f^{k, r}_{i,j} = \left\{ \begin{array}{ll}
               1, & \mbox{if the flow $F^{k, r}$ uses the directed edge $e_{i, j}$} \\
               0, & \mbox{otherwise}
                   \end{array}
           \right. \]

\end{itemize}

\noindent
 In order to establish a flow (or path) from node $u_i, 1 \leq i \leq s$ to node $v_0$, the classical flow constraints must be imposed in the following way:\\

\noindent
Constraint at node $v_0$: for flow $f^{k, r}$, i.e., the $r$-th path from $u_k$ to $v_0$ ($v_0$ is the destination node),\\ $\forall k, 1 \leq k \leq s, \forall r, 1 \leq r \leq R+1$
 \begin{equation}
\sum_{j \in N(v_0)} f^{k, r}_{j, 0}-\sum_{j \in N(v_0)} f^{k, r}_{0, j} = 1
 \end{equation}

\vspace{0.1in}
\noindent
Constraint at node $u_k$, $1 \leq k \leq s$, $\forall r, 1 \leq r \leq R+1$ for flow $f^{k, r}$ ($u_k$ is the source node)
 \begin{equation}
\sum_{j \in N(u_k)} f^{k, r}_{j, k}-\sum_{j \in N(u_k)} f^{k, r}_{k, j} = -1
 \end{equation}

 \vspace{0.1in}
\noindent
Constraint at node $u_l, u_l \in \{V - \{v_0, u_k\}\}$ ($u_l$ is an intermediate node),  $\forall k, 1 \leq k \leq s, \forall r, 1 \leq r \leq R+1$

 \begin{equation}
   \sum_{j \in N(u_l)} f^{k, r}_{j, l}-\sum_{j \in N(u_l)} f^{k, r}_{l, j} = 0
 \end{equation}

 

            
                  
 \noindent
       Since all $R + 1$ paths from $u_k, 1 \leq k \leq s$ to $v_0$, $F^{k, r}, 1 \leq r \leq R+1$ must be edge-disjoint (i.e., no two paths should be using the same edge $e_{i, j}$ to satisfy the fault-tolerance requirement), we will have the following {\em Disjointness Constraint}.  $\forall k, 1 \leq k \leq s$
       \begin{equation}
           \sum_{r = 1}^{R + 1} 
       f^{k, r}_{i, j} \leq 1
       \end{equation}

\subsection{Computational Complexity of a variation of the RDCMCCC Problem}

       An Integer Linear Program with the following objective function 
       
    \begin{equation}
        {\tt ~Minimize} \sum_{i = 1}^n \sum_{j = 1}^n \sum_{k = 1}^s \sum_{r = 1}^{R +1} C_{i, j}f^{k, r}_{i, j}
       \end{equation}

       \noindent
       subject to the constraints (4), (5), (6) and (7) specified earlier may be viewed as a variation of the  RDCMCCC problem.  The version differs from the RDCMCCC in the following two ways:\\
       (i) In the RDCMCC problem, there is a constraint on the tour length of the cycle, in the sense that cycle lengths were not allowed to exceed the delay threshold value $\delta$. In this version there is no such constraint on the length of the cycle.\\
       (ii) In this version the cost of an edge between nodes $v_i$ and $v_j$ may be counted multiple times. First, if one flow from $v_a$ to $v_0$ uses the directed edge $v_i \rightarrow v_j$ and another flow from $v_b$ to $v_0$ uses the directed edge $v_j \rightarrow v_i$, the cost of the link connecting the nodes $v_i$ and $v_j$, $L_{i, j}$ (may be an optical fiber line) will be counted twice. Second, if multiple flows  $v_a$ to $v_0$, $v_b$ to $v_0$, $v_c$ to $v_0$ etc. uses the same directed edge $v_i \rightarrow v_j$, the cost of the $L_{i, j}$ will be counted multiple times. 

       Since this paper is focused on network design problem, and the link $L_{i, j}$ may be used to carry multiple flows, in the cost calculation phase, the cost of the link $L_{i,j}$ should be counted only one time instead of multiple times. For this reason objective function stated in equation (8) is not the appropriate objective function for the RDCMCCC problem. In the following subsection we discuss how additional constraints have to introduced so that cost calculation is carried out correctly in accordance with the stated objective of the RDCMCCC problem.

       The variation of the RDCMCCC problem given by the objective function (8), and constraints (4), (5), (6) and (7) is more in alignment with the versions of the Vehicle Routing Problem \cite{gillett1974heuristic}, as it makes prefect sense in that environment. If multiple vehicles uses the same road segment $v_i \rightarrow v_j$, the cost of this segment must be counted multiple times as some amount of gas will be consumed by each vehicle independently of other vehicles, and as such cost of traversing road segment $v_i \rightarrow v_j$ has to be counted multiple times. Similarly, if one vehicle uses the segment $v_i \rightarrow v_j$, and another vehicle uses $v_j \rightarrow v_i$, the cost of this segment must be counted two times.

       In Section \ref{CC_RDCMCCC} we showed that the RDCMCCC problem is NP-complete. An interesting feature of the variation of the RDCMCCC problem given by the objective function (8) and constraints (4), (5), (6) and (7) can be solved in polynomial time. In the following we show that the matrix formed by the constraints (4), (5), (6) and (7) is Totally Unimodular \cite{schrijver1998theory,ahuja1988network} and hence relaxing the integrality requirement of the variable $f^{k, r}_{i,j}$, a polynomial time algorithm can be developed using the Ellipsoid method \cite{schrijver1998theory}. In the following we show that the matrix formed by the constraints (4), (5), (6) and (7) is Totally Unimodular.


\begin{figure*}[htbp!]
    \centering
    \includegraphics[width = 1\linewidth]{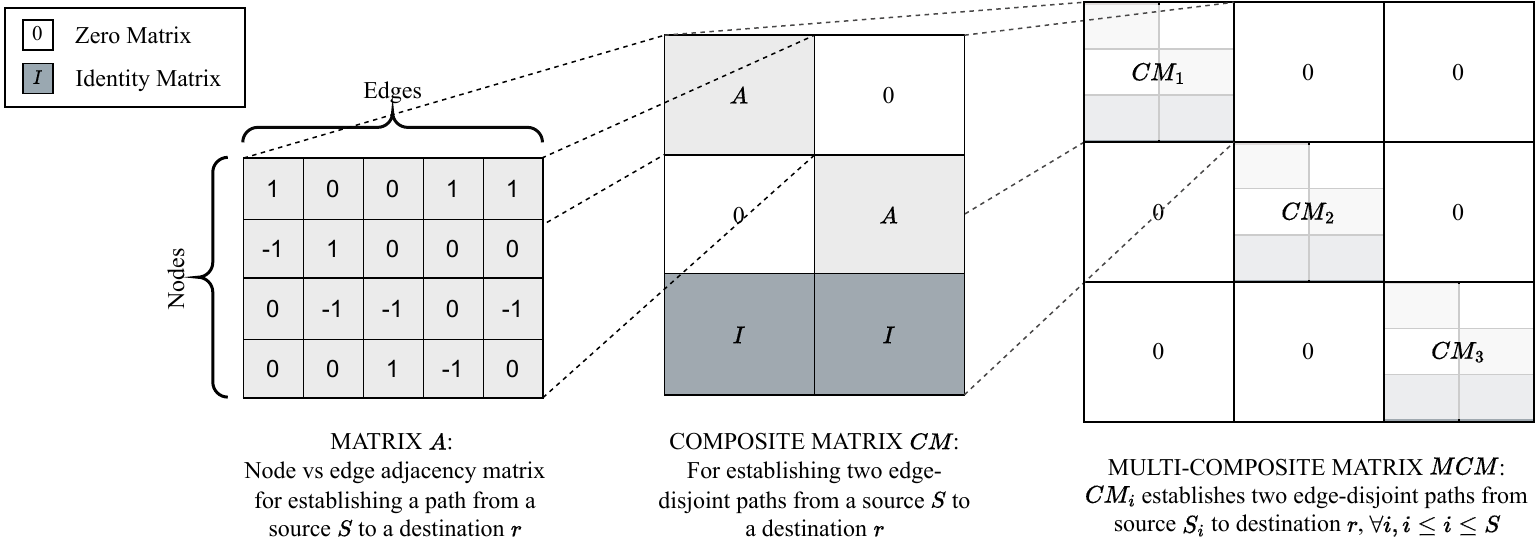}
    \vspace{-6mm}
    \caption{Constraint matrix of the ILP formulation of the (RDCMC-(R=1)-FTNDP) (MCM) and its formation from CM and Matrix A}   
    \label{fig_tum}  
    \vspace{-4mm}
\end{figure*}

The node-vs-(directed)-edge matrix of a graph with four nodes and 6 edges is shown in Fig. 2 (Matrix A). If only one path from a substation $v_i$ to the root node $v_1$ had to be established, this would have been the constraint matrix of the corresponding ILP. It is well known that this matrix is TUM \cite{ahuja1988network}. If two disjoint paths from from one node $v_i$ to $v_1$ had to be established, the constraint matrix would have looked like the Composite Matrix CM shown in Fig 2. In the following, we show that if Matrix A is a TUM, so is CM.

$$CM = \begin{pmatrix}
    A & 0 \\
    0 & A \\
    I & I
\end{pmatrix}$$
such that $A$ is a TUM where every column consists of a single $1$ and a single $-1$, and $I$ is the identity matrix.

\begin{theorem}
    $CM$ is a TUM.
\end{theorem}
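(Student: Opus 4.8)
The plan is to verify total unimodularity through the Ghouila--Houri characterization \cite{schrijver1998theory}: a $\{0,\pm1\}$ matrix is TUM if and only if every subset of its rows can be partitioned into two classes so that, in each column, the sum of the entries in the first class minus the sum in the second lies in $\{-1,0,1\}$. Equivalently, I would assign a sign $\sigma(\cdot)\in\{+1,-1\}$ to every chosen row and require every column sum to stay in $\{-1,0,1\}$. First I would organize the rows of $CM$ into three groups matching the block structure: the rows of the top copy of $A$ (the \emph{$A_1$-rows}, indexed by vertices and supported on the left columns), the rows of the bottom copy of $A$ (the \emph{$A_2$-rows}, supported on the right columns), and the rows of $[\,I \; I\,]$ (the \emph{$I$-rows}, one per edge $e$, carrying a single $+1$ in the left column $L_e$ and a single $+1$ in the matched right column $R_e$).

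Next I would record what each column demands. A left column $L_e$ receives contributions only from the two $A_1$-rows at the endpoints of $e$ and from the single $I$-row of $e$; a right column $R_e$ behaves symmetrically with the $A_2$-rows. Because every column of $A$ has exactly one $+1$ and one $-1$, a signing of the $A_1$-rows keeping the purely-$A_1$ part of each left column in $\{-1,0,1\}$ always exists (the all-$+1$ signing already works for $A$ in isolation), and likewise for $A_2$. The content of the theorem is therefore concentrated in the coupling imposed by the $I$-rows: signing the $I$-row of $e$ by $\sigma(e)$ adds $\sigma(e)$ to \emph{both} $L_e$ and $R_e$ at once. A short case check shows that no common value $\sigma(e)$ keeps both columns in range exactly when the $A$-parts of $L_e$ and $R_e$ are opposite units, i.e.\ one equals $+1$ and the other $-1$. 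Using that a balanced signing of $A$ forces the two endpoints of a fully-selected edge to share a sign, I would then show this opposite-unit configuration can arise only when exactly one endpoint of $e$ is present among the $A_1$-rows and exactly one among the $A_2$-rows.

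The main obstacle is to choose the signings of the $A_1$- and $A_2$-rows \emph{globally} so that no coupled edge ever exhibits this opposite-unit pattern. When the selected endpoint is the same vertex in both copies the conflict is killed by copying the signing across the two blocks ($\sigma^{(2)}=\sigma^{(1)}$); the residual difficulty is the \emph{diagonal} case, where the edge is witnessed by its tail in one copy and its head in the other, since there the requirement becomes a genuine constraint \emph{between distinct vertices}. I expect to discharge it by decomposing the selected edges (again exploiting the one-$+1$/one-$-1$ column structure of $A$) into simple paths and cycles and alternating the signs consistently along each piece, which propagates a coherent assignment over every connected component and removes all diagonal conflicts simultaneously. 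As a conceptual backstop I would observe that the whole system is the incidence-plus-capacity description of two edge-disjoint paths to $v_0$, i.e.\ a single-commodity unit-capacity flow of value two, whose constraint matrices are classically TUM \cite{ahuja1988network}; an equivalent alternative route is a determinant induction on square submatrices, expanding along an $I$-row (which has at most two nonzeros in any submatrix) and checking that the only irreducible configuration is singular, which collapses to the same coupling core treated above.
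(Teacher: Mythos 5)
You correctly isolate the crux --- the per-edge coupling: with the coupling row $r_e$ selected, the only unresolvable local pattern is the opposite-unit pair on $(L_e,R_e)$, arising exactly when each copy of $A$ selects one endpoint of $e$. But the step where you plan to kill the \emph{diagonal} conflicts by decomposing the selected edges into paths and cycles and alternating signs cannot be completed, because the induced parity constraints can contain an odd cycle. Work out what the columns actually force: if $r_e$ is selected and each copy selects exactly one endpoint of $e$, both column sums are sums of two $\pm1$ entries, hence must vanish, which forces $\sigma^{(1)}(u)=\sigma^{(2)}(u)$ when the selected endpoints coincide and $\sigma^{(1)}(u)=-\sigma^{(2)}(v)$ when they differ (in both cases independently of the arc's orientation); and any column whose coupling row is \emph{not} selected but both of whose endpoint rows are selected in one copy forces those two rows to share a sign. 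Now let $A$ contain the arcs $(a,b),(b,u),(u,w),(u,a)$ and take the row subset consisting of top rows $\{a,u\}$, bottom rows $\{b,u\}$, and the coupling rows of $(a,b)$ and $(u,w)$. Column $L_{(u,a)}$ forces $\sigma^{(1)}(a)=\sigma^{(1)}(u)$, column $R_{(b,u)}$ forces $\sigma^{(2)}(b)=\sigma^{(2)}(u)$, the pair $L_{(u,w)},R_{(u,w)}$ forces $\sigma^{(1)}(u)=\sigma^{(2)}(u)$, and the pair $L_{(a,b)},R_{(a,b)}$ forces $\sigma^{(1)}(a)=-\sigma^{(2)}(b)$: a contradiction, so Ghouila--Houri fails for this subset. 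Equivalently, the $6\times 6$ submatrix of $CM$ on these six rows and the columns $L_{(a,b)},R_{(a,b)},R_{(b,u)},L_{(u,w)},R_{(u,w)},L_{(u,a)}$,
\[
\begin{pmatrix}
1 & 0 & 0 & 0 & 0 & -1\\
0 & 0 & 0 & 1 & 0 & 1\\
0 & -1 & 1 & 0 & 0 & 0\\
0 & 0 & -1 & 0 & 1 & 0\\
1 & 1 & 0 & 0 & 0 & 0\\
0 & 0 & 0 & 1 & 1 & 0
\end{pmatrix},
\]
has determinant $-2$. Since this $A$ is a node--arc incidence matrix (every column one $+1$ and one $-1$, hence TUM), this is a counterexample to the statement at the generality at which it is asserted: no sign-propagation scheme can succeed, because no valid signing exists.

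Your two backstops fail for related reasons, and it is worth seeing why. The classically TUM object is the \emph{aggregated} single-commodity formulation (one flow variable per arc, capacity rows appended); the matrix $CM$ is the \emph{disaggregated} formulation with separate unit flows coupled by the block $[\,I\;\,I\,]$, which is structurally a two-commodity matrix --- the textbook setting where total unimodularity is lost. Integrality of min-cost two edge-disjoint paths (Suurballe) is a statement about that particular optimization, not about TU of this constraint matrix. The determinant induction stalls at the same place: an $I$-row surviving with two nonzeros yields a \emph{sum} of two cofactors, which is precisely how the value $\pm2$ above is produced. For comparison, the paper takes a different route (column operations reducing $CM$ to $C'=\bigl(\begin{smallmatrix} A & 0\\ 0 & A\\ I & 0\end{smallmatrix}\bigr)$, followed by a case analysis and induction on square submatrices), but it trips over the same stone: total unimodularity concerns \emph{all} square submatrices, and is not preserved under global column operations, since submatrices of $CM$ that contain a right column without its matched left column do not correspond to submatrices of $C'$. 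So the obstruction you could not discharge is a genuine defect of the claimed theorem (the paper's own conclusion hedges to ``likely unimodular''), not a gap you could have closed with a cleverer alternating argument.
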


Subtract the columns of the leftmost $I$ from the columns of the rightmost $I$ to get $\begin{pmatrix} A & -A \\ 0 & A \\ I & 0 \end{pmatrix}$ then add rows of the bottom $A$ to give $C' = \begin{pmatrix} A & 0 \\ 0 & A \\ I & 0 \end{pmatrix}$. These operations only change the sign of the determinant so it suffices to show that $C'$ is a TUM.

Consider a square submatrix of $C'$, call it $M$. If $M$ belongs to either $A$ component, either $0$ component, or the $I$ component it is obviously unimodular. If $M$ belongs to one of the components of the form $\begin{pmatrix}A \\ 0\end{pmatrix}$, $\begin{pmatrix}0 \\ A\end{pmatrix}$, $\begin{pmatrix}0 & A\end{pmatrix}$, $\begin{pmatrix}A & 0\end{pmatrix}$, $\begin{pmatrix}I \\ 0\end{pmatrix}$, or $\begin{pmatrix}0 \\ I\end{pmatrix}$ the determinant is $0$. If $M$ spans $\begin{pmatrix} A & 0 \\ 0 & A\end{pmatrix}$ every column, as stated above, has at most $2$ non-zero entries.

To see that $M$ if unimodular if it spans $\begin{pmatrix} 0 & A \\ I & 0\end{pmatrix}$ or $\begin{pmatrix} A & 0 \\ 0 & A \\ I & 0\end{pmatrix}$, induct on the size of $M$. No $1 \times 1$ matrix spans all these matrices, hence the base case is vascuously true. Now suppose it holds that every submatrix with dimension $n \times n$ has a determinant that is either $+1$, $-1$, or $0$, and suppose $M$ has dimension $n+1 \times n+1$. Then if $M$ does not span $\begin{pmatrix} 0 & A \\ I & 0\end{pmatrix}$ or $\begin{pmatrix} A & 0 \\ 0 & A \\ I & 0\end{pmatrix}$ it has already been shown to be a TUM via the argument above, so suppose $M$ spans $\begin{pmatrix} 0 & A \\ I & 0\end{pmatrix}$ or $\begin{pmatrix} A & 0 \\ 0 & A \\ I & 0\end{pmatrix}$. One can always find a row beginning with a $1$ and everything else is $0$ or everything is $0$. Then, cofactor expansion can reduce the determinant to that of an $n \times n$ matrix with the sign possibly switched. The inductive hypothesis has a determinant of either $0$, $1$, or $-1$, and so $M$ is unimodular. Furthermore, $C'$ is a TUM.

\begin{theorem}
    If the Composite Matrix $CM$ is a TUM so is the Multi-Composite Matrix MCM (shown in Fig. 2).
\end{theorem}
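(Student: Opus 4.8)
The plan is to recognize that the Multi-Composite Matrix $MCM$ is simply a block-diagonal assembly of copies of $CM$, one per substation, and then to invoke the standard fact that a block-diagonal matrix is TUM precisely when each of its diagonal blocks is. First I would make the structural observation that the flows belonging to distinct substations $u_k$ are described by disjoint sets of variables $f^{k,r}_{i,j}$, and that every constraint --- the flow-conservation constraints (4)--(6) as well as the disjointness constraint (7) --- involves variables carrying only a single fixed index $k$. Consequently, ordering the columns by $k$ (and within each $k$ by $(r,i,j)$) and grouping the rows by the substation to which each constraint belongs, the constraint matrix takes the form
\[
MCM = \begin{pmatrix} CM & & \\ & \ddots & \\ & & CM \end{pmatrix},
\]
a block-diagonal matrix with $s$ copies of $CM$ on the diagonal and zeros off the diagonal.

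The key lemma I would then establish is: if each diagonal block of a block-diagonal integer matrix is TUM, then the whole matrix is TUM. To prove it, consider an arbitrary square submatrix $M$ obtained by selecting some rows and columns. Because the off-diagonal blocks are zero, after a permutation $M$ is itself block-diagonal with blocks $M_1, \ldots, M_s$, where each $M_\ell$ is a submatrix of the corresponding copy of $CM$. If every $M_\ell$ is square, then $\det M = \prod_\ell \det M_\ell$, and since each factor lies in $\{-1, 0, 1\}$ by the assumed total unimodularity of $CM$, so does the product. If some $M_\ell$ is not square --- say it has more columns than rows --- then those columns have nonzero entries confined to the rows of block $\ell$, so they lie in a coordinate subspace of dimension smaller than their number and are therefore linearly dependent, forcing $\det M = 0$; the symmetric argument handles the case of more rows than columns. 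In every case $\det M \in \{-1, 0, 1\}$, which is exactly total unimodularity of $M$.

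Finally I would apply this lemma with each diagonal block equal to $CM$, which is TUM by the hypothesis of the theorem, to conclude that $MCM$ is TUM. Equivalently, one can argue by induction on $s$, peeling off one $CM$ block at a time.

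The main obstacle, such as it is, lies less in the determinant bookkeeping and more in justifying the structural claim that $MCM$ really is block-diagonal. I expect the point requiring care is verifying that no constraint couples the flow variables of two different substations: the flow-conservation equations are written per flow $F^{k,r}$, and the disjointness constraint (7) is imposed separately for each fixed $k$, so the coupling introduced by the $I\ I$ rows inside $CM$ stays entirely within a single substation's block and never crosses between blocks. Once this decoupling is confirmed, the remainder is the routine block-diagonal determinant argument sketched above.
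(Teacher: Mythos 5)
Your proof is correct, but it takes a genuinely different route from the paper, which disposes of this theorem with a one-line citation to \cite{schrijver1998theory} (page 280) --- the relevant background fact there being that total unimodularity is preserved under direct (block-diagonal, or ``1-sum'') composition. Your argument supplies exactly what that citation leaves implicit, in two steps: first, the structural observation that the constraints (4)--(7) for distinct substations $u_k$ involve pairwise disjoint sets of variables $f^{k,r}_{i,j}$ --- each flow-conservation row carries a fixed pair $(k,r)$ and each disjointness row (7) a fixed $k$ --- so that after permuting rows and columns $MCM$ is block diagonal with $s$ copies of $CM$; second, the standard submatrix argument that a block-diagonal matrix with TUM blocks is TUM. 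Your treatment of the non-square case (columns selected from one block are confined to fewer selected rows of that block, hence linearly dependent, hence $\det M = 0$) is the right way to handle square submatrices that do not split into square diagonal blocks, and the permutations you use only flip the sign of the determinant, which is harmless for membership in $\{-1,0,1\}$. You also correctly identify the one point needing care: the $(I \;\; I)$ coupling rows inside $CM$ tie together the two flows $r=1,2$ of a \emph{single} substation and never cross between substations, which is precisely the decoupling the paper tacitly assumes when it draws $MCM$ in Fig.~2 (the full ILP does contain cross-substation constraints, namely (12)--(16), but the theorem concerns only the matrix of (4)--(7), so your claim stands). What each approach buys: the paper's citation is terse and leans on the reader matching the figure to Schrijver's operation, while your version is self-contained, makes the variable-disjointness explicit, and generalizes verbatim to arbitrary $R$ and $s$.
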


\begin{proof}
    See \cite{schrijver1998theory} (page 280).
\end{proof}

\subsection{Additional Constraints for the RDCMCCC Problem}

 \noindent
       The {\em Total flow} on the edge $e_{i, j}$ is denoted by ${TF}_{i,j}$ and is given as the sum of all the flows on that edge,
       \begin{equation}
           {TF}_{i,j} = \sum_{ k = 1}^s \sum_{ r = 1}^ {R + 1} f^{k, r}_{i,j}
       \end{equation}
           Since in our model, data flow is {\em unidirectional}, i.e., from the substations with PMUs, i.e., $u_i. 1 \leq i \leq s$ nodes to the Control Center, i.e., $v_0$, we treat flow on the edge $e_{i, j}$ (i.e., $i \rightarrow j$) to be different from the flow on the edge $e_{j, i}$ (i.e., $j \rightarrow i$). Accordingly, ${TF}_{i,j}$ may be quite different from ${TF}_{j,i}$. If ${TF}_{i,j} > 0$ or ${TF}_{j,i} > 0$ on a communication link, $L_{i, j}$ (may be an optical fiber cable) needs to be installed to connect two substations $v_i$ and $v_j$. We assume that the link $L_{i, j}$ is {bi-directional} and is able to carry data traffic in both $i \rightarrow j$ and $j \rightarrow i$ directions. We denote the {\em cost} of the link $L_{i, j}$ by $C_{i,j}$.

           
\vspace{0.1in}
\noindent
           We use a binary variable $X_{i,j}$ to indicate whether the link $L_{i, j}$ is installed or not. If the link $L_{i, j}$ is installed, we have to account for its cost $C_{i, j}$; otherwise, not.  Since the objective of the RDCMCC problem is to minimize the network design cost, it can be expressed with the following objective function:
           \begin{equation}
               {\tt Objective~Function:~Minimize} \sum_{i = 1}^n \sum_{j > i}^n C_{i, j}X_{i, j}
           \end{equation}
           
            \vspace{0.1in}
           \noindent
           The link $L_{i, j}$ must be  installed only if ${TF}_{i,j} \geq 1$ or ${TF}_{j,i} \geq 1$. We introduce a variable $MaxTF$ for this purpose and define it as follows 
           
        \begin{equation}
            {MaxTF}_{i,j} = max ({TF}_{i,j}, {TF}_{j,i}) 
        \end{equation}

           $MaxTF$ can be expressed with the following inequalities:
           \begin{equation}
               {MaxTF}_{i,j} \geq {TF}_{i,j}
               \end{equation}
               \begin{equation}
            {MaxTF}_{i,j} \geq {TF}_{j,i}
           \end{equation}

           
           
        \noindent
        The constraint on the binary variable $X_{i,j}$ can now be expressed as 
        
           \begin{equation}
                X_{i,j} = \left\{ \begin{array}{ll}
               1, & \mbox{if  $Max{TF} \geq 1$}\\
               0, & \mbox{otherwise}
                   \end{array}
           \right. 
           \end{equation}
           

           \vspace{0.1in}
           \noindent
           It may be noted that while in the objective function, the binary variable $X_{i, j}$ appears but the flow (path) variables, such as $f^{k, r}_{i,j}$, used to establish the paths from $u_i, 1 \leq i \leq s$ to $v_0$ do not appear. Since the objective function tries to minimize \(\sum_{i = 1}^n \sum_{j > i}^n C_{i, j}X_{i, j}\), unless some other constraints are specified, there is nothing to prevent all $X_{i,j}$ values being zero with minimum value of the objective function being equal to zero. The constraints that should prevent all $X_{i,j}$ values from being zero are the establishment of $s \times (R +1)$ paths from the set of nodes $U$ to the node $v_0$. This constraint can be ensured by the following inequality
        \begin{equation}
                  ~{MaxTF}_{i,j} \leq s \cdot X_{i,j} 
        \end{equation}

 
 In the RDCMCCC substation data must reach the Control Center within the specified delay threshold $\delta$. This implies that the delay on both the disjoint paths from the substation to the Control Center be less than $\delta$. This constraint can be imposed in the following way:
 

 \begin{equation}
      \forall k, 1\leq k \leq s, \forall r, 1 \leq r \leq R+1,\sum_{i = 0}^n \sum_{j = 0}^n D_{i, j}f^{k, r}_{i, j}\leq \delta
 \end{equation}

\section{Heuristic Solution for RDCMCCC Problem}

        \begin{table*}[h!]
\begin{adjustbox}{width = \textwidth}
\centering
\begin{tabular}{|c|c|c|c|c|c|c|c|c|c|c|c|c|} 
 \hline
{\bf Phoenix} & \multicolumn{12}{c|}{Ratio = Sweep\_RDCMCCC($P$, $\delta$)/ OPT\_RDCMCCC($P$, $\delta$) 
  (R=1) } \\ \hline \hline
            
   & \multicolumn{3}{c|}{$\delta$ = 100} & \multicolumn{3}{c|} {$\delta$ = 150} & \multicolumn{3}{c|} {$\delta$ = 200} & \multicolumn{3}{c|} {$\delta$ = 250}\\ \hline
  Data Sets&Sweep& Optimal&Ratio&Sweep& Optimal&Ratio&Sweep& Optimal&Ratio&
            Sweep& Optimal&Ratio \\ \hline
     \multirow{2}{*}{\centering $DS_1(35\%)$} & \begin{tabular}[t]{c} 774.29 \\ 0.00 \end{tabular} & \begin{tabular}[t]{c} 225.23 \\ 0.00 \end{tabular} & \begin{tabular}[t]{c} 3.44 \\ 0.00 \end{tabular} & \begin{tabular}[t]{c} 578.11 \\ 0.00 \end{tabular} & \begin{tabular}[t]{c} 216.44 \\ 0.00 \end{tabular} & \begin{tabular}[t]{c} 2.67 \\ 0.00 \end{tabular} & \begin{tabular}[t]{c} 531.79 \\ 0.00 \end{tabular} & \begin{tabular}[t]{c} 216.44 \\ 0.00 \end{tabular} & \begin{tabular}[t]{c} 2.46 \\ 0.00 \end{tabular} & \begin{tabular}[t]{c} 531.12 \\ 0.00 \end{tabular} & \begin{tabular}[t]{c} 215.00 \\ 0.00 \end{tabular} & \begin{tabular}[t]{c} 2.47 \\ 0.00 \end{tabular}
     \\ \hline
     \multirow{2}{*}{\centering $DS_2(30\%)$} & \begin{tabular}[t]{c} 546.33 \\ 23.09 \end{tabular} & \begin{tabular}[t]{c} 206.17 \\ 25.43 \end{tabular} & \begin{tabular}[t]{c} 2.68 \\ 0.44 \end{tabular} & \begin{tabular}[t]{c} 485.24 \\ 44.30 \end{tabular} & \begin{tabular}[t]{c} 204.28 \\ 17.74 \end{tabular} & \begin{tabular}[t]{c} 2.37 \\ 0.02 \end{tabular} & \begin{tabular}[t]{c} 428.44 \\ 54.27 \end{tabular} & \begin{tabular}[t]{c} 204.21 \\ 17.70 \end{tabular} & \begin{tabular}[t]{c} 2.10 \\ 0.14 \end{tabular} & \begin{tabular}[t]{c} 403.38 \\ 36.50 \end{tabular} & \begin{tabular}[t]{c} 203.16 \\ 16.80 \end{tabular} & \begin{tabular}[t]{c} 1.98 \\ 0.02 \end{tabular}
     \\
     \hline
     \multirow{2}{*}{\centering $DS_3(25\%)$} & \begin{tabular}[t]{c} 450.46 \\ 73.21 \end{tabular} & \begin{tabular}[t]{c} 190.58 \\ 17.44 \end{tabular} & \begin{tabular}[t]{c} 2.36 \\ 0.25 \end{tabular} & \begin{tabular}[t]{c} 401.27 \\ 58.92 \end{tabular} & \begin{tabular}[t]{c} 184.99 \\ 16.08 \end{tabular} & \begin{tabular}[t]{c} 2.17 \\ 0.23 \end{tabular} & \begin{tabular}[t]{c} 355.43 \\ 26.03 \end{tabular} & \begin{tabular}[t]{c} 182.10 \\ 16.33 \end{tabular} & \begin{tabular}[t]{c} 1.95 \\ 0.04 \end{tabular} & \begin{tabular}[t]{c} 357.70 \\ 27.62 \end{tabular} & \begin{tabular}[t]{c} 182.10 \\ 16.33 \end{tabular} & \begin{tabular}[t]{c} 1.97 \\ 0.15 \end{tabular}
     \\ 
     \hline
        \multirow{2}{*}{\centering $DS_4(20\%)$} & \begin{tabular}[t]{c} 420.97 \\ 127.36 \end{tabular} & \begin{tabular}[t]{c} 185.92 \\ 33.78 \end{tabular} & \begin{tabular}[t]{c} 1.92 \\ 0.24 \end{tabular} & \begin{tabular}[t]{c} 349.82 \\ 94.01 \end{tabular} & \begin{tabular}[t]{c} 178.64 \\ 32.39 \end{tabular} & \begin{tabular}[t]{c} 1.95 \\ 0.31 \end{tabular} & \begin{tabular}[t]{c} 338.21 \\ 105.09 \end{tabular} & \begin{tabular}[t]{c} 174.58 \\ 29.06 \end{tabular} & \begin{tabular}[t]{c} 1.92 \\ 0.43 \end{tabular} & \begin{tabular}[t]{c} 310.87 \\ 94.48 \end{tabular} & \begin{tabular}[t]{c} 174.58 \\ 29.06 \end{tabular} & \begin{tabular}[t]{c}1.76 \\0.35  \end{tabular}
     \\ 
     \hline
      \multirow{2}{*}{\centering $DS_5(15\%)$} & \begin{tabular}[t]{c} 251.54 \\ 18.53 \end{tabular} & \begin{tabular}[t]{c} 144.18 \\ 3.92 \end{tabular} & \begin{tabular}[t]{c} 1.74 \\ 0.09 \end{tabular} & \begin{tabular}[t]{c} 226.60 \\ 37.15 \end{tabular} & \begin{tabular}[t]{c} 140.00 \\ 5.48 \end{tabular} & \begin{tabular}[t]{c} 1.61 \\ 0.21 \end{tabular} & \begin{tabular}[t]{c} 238.40 \\ 76.09 \end{tabular} & \begin{tabular}[t]{c} 140.00 \\ 5.48 \end{tabular} & \begin{tabular}[t]{c} 1.69 \\ 0.47 \end{tabular} & \begin{tabular}[t]{c} 211.01 \\ 37.60 \end{tabular} & \begin{tabular}[t]{c} 140.00 \\ 5.48 \end{tabular} & \begin{tabular}[t]{c} 1.50 \\ 0.21 \end{tabular}
\\
\hline
\end{tabular}
\end{adjustbox}
\vspace{2mm}
\caption{Ratio between the Sweep and Optimal solutions for Phoenix}
\label{table1}
\end{table*}

\begin{table*}[t]
\begin{adjustbox}{width = \textwidth}
\centering
\begin{tabular}{|c|c|c|c|c|c|c|c|c|c|c|c|c|} 
 \hline
{\bf Tucson} & \multicolumn{12}{c|}{Ratio = Sweep\_RDCMCCC($P$, $\delta$)/ OPT\_RDCMCCC($P$, $\delta$) (R=1)}\\ \hline \hline
            
   & \multicolumn{3}{c|}{$\delta$ = 100} & \multicolumn{3}{c|} {$\delta$ = 150} & \multicolumn{3}{c|} {$\delta$ = 200} & \multicolumn{3}{c|} {$\delta$ = 250}\\ \hline
  Data Sets&Sweep& Optimal&Ratio&Sweep& Optimal&Ratio&Sweep& Optimal&Ratio&
            Sweep& Optimal&Ratio \\ \hline
     \multirow{2}{*}{\centering $DS_1(60\%)$} & \begin{tabular}[t]{c} 202.38 \\ 54.64 \end{tabular} & \begin{tabular}[t]{c} 155.87 \\ 54.14 \end{tabular} & \begin{tabular}[t]{c} 1.32 \\ 0.14 \end{tabular} & \begin{tabular}[t]{c} 168.95 \\ 33.79 \end{tabular} & \begin{tabular}[t]{c} 151.14 \\ 51.79 \end{tabular} & \begin{tabular}[t]{c} 1.26 \\ 0.08 \end{tabular} & \begin{tabular}[t]{c} 182.25 \\ 59.71 \end{tabular} & \begin{tabular}[t]{c} 149.66 \\ 53.62 \end{tabular} & \begin{tabular}[t]{c} 1.22 \\ 0.04 \end{tabular} & \begin{tabular}[t]{c} 181.26 \\ 58.09 \end{tabular} & \begin{tabular}[t]{c} 148.34 \\ 51.44 \end{tabular} & \begin{tabular}[t]{c} 1.22 \\ 0.03 \end{tabular}
     \\ \hline
     \multirow{2}{*}{\centering $DS_2(55\%)$} & \begin{tabular}[t]{c} 221.31 \\ 0.00 \end{tabular} & \begin{tabular}[t]{c} 161.51 \\ 0.00 \end{tabular} & \begin{tabular}[t]{c} 1.37 \\ 0.00 \end{tabular} & \begin{tabular}[t]{c} 222.09 \\ 33.14 \end{tabular} & \begin{tabular}[t]{c} 182.16 \\ 45.22 \end{tabular} & \begin{tabular}[t]{c} 1.25 \\ 0.15 \end{tabular} & \begin{tabular}[t]{c} 214.93 \\ 32.99 \end{tabular} & \begin{tabular}[t]{c} 180.56 \\ 45.82 \end{tabular} & \begin{tabular}[t]{c} 1.21 \\ 0.12 \end{tabular} & \begin{tabular}[t]{c} 213.52 \\ 30.60 \end{tabular} & \begin{tabular}[t]{c} 178.97 \\ 43.07 \end{tabular} & \begin{tabular}[t]{c} 1.21 \\ 0.12 \end{tabular}
     \\
     \hline
     \multirow{2}{*}{\centering $DS_3(50\%)$} & \begin{tabular}[t]{c} 149.43 \\ 0.01 \end{tabular} & \begin{tabular}[t]{c} 115.84 \\ 0.00 \end{tabular} & \begin{tabular}[t]{c} 1.29 \\ 0.00 \end{tabular} & \begin{tabular}[t]{c} 153.88 \\ 43.84 \end{tabular} & \begin{tabular}[t]{c} 133.58 \\ 46.02 \end{tabular} & \begin{tabular}[t]{c} 1.17 \\ 0.06 \end{tabular} & \begin{tabular}[t]{c} 160.35 \\ 55.05 \end{tabular} & \begin{tabular}[t]{c} 131.40 \\ 42.24 \end{tabular} & \begin{tabular}[t]{c} 1.21 \\ 0.02 \end{tabular} & \begin{tabular}[t]{c} 153.21 \\ 42.68 \end{tabular} & \begin{tabular}[t]{c} 131.40 \\ 42.24 \end{tabular} & \begin{tabular}[t]{c} 1.17 \\ 0.05 \end{tabular}
     \\ 
     \hline
        \multirow{2}{*}{\centering $DS_4(45\%)$} & \begin{tabular}[t]{c} 165.31 \\ 22.70 \end{tabular} & \begin{tabular}[t]{c} 137.86 \\ 14.48 \end{tabular} & \begin{tabular}[t]{c} 1.20 \\ 0.08 \end{tabular} & \begin{tabular}[t]{c} 157.68 \\ 24.36 \end{tabular} & \begin{tabular}[t]{c} 130.43 \\ 14.11 \end{tabular} & \begin{tabular}[t]{c} 1.21 \\ 0.09 \end{tabular} & \begin{tabular}[t]{c} 156.69 \\ 23.00 \end{tabular} & \begin{tabular}[t]{c} 130.43 \\ 14.11 \end{tabular} & \begin{tabular}[t]{c} 1.20 \\ 0.08 \end{tabular} & \begin{tabular}[t]{c} 156.69 \\ 23.00 \end{tabular} & \begin{tabular}[t]{c} 130.43 \\ 14.11 \end{tabular} & \begin{tabular}[t]{c} 1.20 \\ 0.08 \end{tabular}
     \\ 
     \hline
      \multirow{2}{*}{\centering $DS_5(40\%)$} & \begin{tabular}[t]{c} 199.25 \\ 53.79 \end{tabular} & \begin{tabular}[t]{c} 143.01 \\ 3.83 \end{tabular} & \begin{tabular}[t]{c} 1.39 \\ 0.34 \end{tabular} & \begin{tabular}[t]{c} 206.21 \\ 41.69 \end{tabular} & \begin{tabular}[t]{c} 157.72 \\ 40.36 \end{tabular} & \begin{tabular}[t]{c} 1.32 \\ 0.07 \end{tabular} & \begin{tabular}[t]{c} 198.87 \\ 50.27 \end{tabular} & \begin{tabular}[t]{c} 156.04 \\ 37.45 \end{tabular} & \begin{tabular}[t]{c} 1.27 \\ 0.10 \end{tabular} & \begin{tabular}[t]{c} 190.83 \\ 37.41 \end{tabular} & \begin{tabular}[t]{c} 156.04 \\ 37.45 \end{tabular} & \begin{tabular}[t]{c} 1.23 \\ 0.12 \end{tabular}
\\
\hline
\end{tabular}
\end{adjustbox}
\vspace{2mm}
\caption{Ratio between the Sweep and Optimal solutions for Tucson}
\label{table2}
\end{table*}

In this section we provide a heuristic solution for the RDCMCCC problem. The locations of the substations and the Control Center are considered as $n$ points on a two dimensional plane. The rectangular (i.e., $(x, y)$) coordinates of the Control Center is taken to be (0, 0). The location of the $i$-th substation (i.e., point $p_i$) is expressed in polar coordinate as $(r_i, \theta_i)$, where $r_i$ denotes its distance from the Control Center, and $\theta_i$ denotes its polar coordinate angle. The substation locations points, $P = \{p_1, p_2, p_3, \ldots, p_n\}$ are numbered in non-decreasing order of their polar angle, i.e., $\forall i, 1 \leq i \leq {n - 1}, \theta_i \leq \theta_{i + 1}$.
The heuristic is based on the idea of a vector first connecting the point (0,0) to point $(r_1, \theta_1)$ and then rotating in an anti-clockwise fashion sequentially sweeping the rest of the set of points \{$(r_2, \theta_2), (r_3, \theta_3),, \ldots, (r_n, \theta_n)\}$. In other words, the vector {\em sweeps} the points one after another in non-decreasing order of their polar angle. Because of the nature of its working, we refer to it as the {\em SWEEP} heuristic. It attempts to construct the first  cycle $C_1$, starting and ending at (0, 0), and including the largest number of points \{$(r_1, \theta_1), (r_2, \theta_2), \ldots, (r_k, \theta_a)\}$, as long as the length of the cycle with points $\{(0, 0), (r_1, \theta_1), (r_2, \theta_2), \ldots, (r_k, \theta_a)\}$ does not exceed the acceptable delay threshold $\delta$. It then repeats the process by trying to create a second cycle $C_2$ with points $\{(0, 0), (r_{a+1}, \theta_{a+1}), (r_{a+2}, \theta_{a+2}), \ldots, (r_b, \theta_b)\}$ without exceeding the delay threshold $\delta$. This repetition of cycle creation process continues till all the points are included in at least one cycle.

\begin{algorithm}
\SetAlgoNlRelativeSize{0}
\SetNlSty{textbf}{(}{)}
\SetAlgoNlRelativeSize{-1}

\KwIn{A set of points on a plane: $P=\{p_0,p_1,p_2,\cdots,p_n\}$, 
 $p_0$ is the root node( i.e the location of control center), Maximum cycle length: $\delta$}
\KwOut{A set of cycles $\mathcal{C}$}

\caption{Sweep Heuristic for RDCMCCC}

\BlankLine

Sort the points in $P$ based on their polar angle with respect to the root node $p_0$.

$\mathcal{C} := \varnothing$

Initialize  $j$=1

$C_{1} := \{p_0\}$.


\tcc{Creation of First cycle}
\While{$j \leq n$}{
\If{length$(C_1 \cup \{p_j\}) \leq \delta$}{

$C_{1} \leftarrow C_{1} \cup \{p_j\}$.

$j \leftarrow j+1$
}

\Else{

$\mathcal{C} \leftarrow \mathcal{C} \cup \{C_{1}\}$


break

}

}

Initialize $i$=2 

\tcc{Creation of subsequent cycles}

\While{$j \leq n$}{

Create new cycle $C_{i}$ after analysing all the candidates created with the help of cycle $C_{i-1}$(as discussed in Section V).

Set $j \leftarrow j+x$ ,where $x$ is the number of new points included in the cycle $C_{i}$

$\mathcal{C} \leftarrow \mathcal{C} \cup  \{C_{i}\}$

$i \leftarrow i+1$

}

Return $\mathcal{C}$ which contains set of cycles that covers all the points

\end{algorithm}

During the creation of cycles $C_2$ and beyond, some refinements are done to reduce the total design cost of the cycles. The design cost is the sum of the cost of the edges of that make up the cycles. In the following, we discuss the refinement scheme in detail.\\
\noindent
$\cal C$: The set of cycles $\{C_1, C_2, \ldots \}$ created by the Sweep heuristic\\
$P$: The set of points $\{p_0, p_1, p_2, \ldots p_n\}$ representing the locations of the Control Center and the substations. The point $p_0$ represent the location of the CC and the rest of the points represent the locations of the substations

\vspace{0.05in}
\noindent
Suppose that the first cycle $C_1$ created by SWEEP includes the points $p_0, p_1, p_2, \ldots, p_a$, i.e.,\\
$C_1: p_0 \rightarrow p_1 \rightarrow p_2 \rightarrow \ldots \rightarrow p_a \rightarrow p_0$

\vspace{0.05in}
\noindent
Suppose that at some stage of execution of SWEEP, the cycles $C_1, C_2, \ldots, C_{i}$  have been created and it's about to create the cycle $C_{i + 1}$. 

\vspace{0.05in}
\noindent
Suppose that the points on the $i$-th cycle $C_i$ are denoted by $p_0, q_i^1, q_i^2, \ldots, q_i^{a_i}$, where  $q_i^1, q_i^2, \ldots, q_i^{a_i}$ correspond to the points $p_r, p_{r +1}, \ldots, p_{r + {a_i} - 1}$  of the set of points $P$.

\vspace{0.05in}
\noindent
The cycle $C_{i+1}$ is determined by first creating a set of $a_i$ potential candidates of $C_{i+1}$.  We denotes these candidates as $C^1_{i+1}, C^2_{i+1}, \ldots, C^{a_i}_{i+1}$. Each candidate $C^s_{i+1}, 1 \leq s \leq a_i$ includes some of the points of the immediately preceding cycle $C_i$ and some new points $p_b, p_{b+1}, \ldots, p_{b + c}$ which were not part of any previously generated cycles $C_1$ through $C_i$.

\vspace{0.05in}
\noindent
If the $i$-th cycle $C_i$ comprises of $p_0, q_i^1, q_i^2, \ldots, q_i^{a_i}$, the $w$-th candidate of the $i + 1$-th cycle (i.e., $C^w_{i+1}, 1 \leq w \leq a_i$) will comprise of either $p_0, q_i^1, q_i^2, \ldots, q_i^{w}, p_b, p_{b+1}, \ldots, p_{b + c}$ or $p_0, q_i^{a_i}, q_i^{a_i - 1}, \ldots, q_i^{w}, p_b, p_{b+1}, \ldots, p_{b + d}$, depending on whether the length of the path $p_0, q_i^1, q_i^2, \ldots, q_i^{w}$  is shorter or longer than the path $p_0, q_i^{a_i}, q_i^{a_i - 1}, \ldots, q_i^{w}$ on the cycle $C_i$.

\vspace{0.05in}
\noindent
Each candidate for $C_{i +1}$  will have zero or more points from the previous cycle $C_i$ and will include some new points which were not part of any previous cycle. The reason for using some of the points from the previous cycle $C_i$ is that the link cost of connecting these points have already been paid for during the construction of $C_i$ and as such no additional cost for these links are incurred during the construction of $C_{i + 1}$. Furthermore, each candidate solution of $C_i$ includes some new points and also has a certain cost (this is the cost of the new links to be introduced, excluding the ones that were already paid for during the construction of the previous cycle). If a candidate solution includes $x$ new points with $y$ amount of additional link cost, the candidate that has the best {\em cost/benefit} ratio is chosen as the cycle $C_{i +1}$. It may be noted that {\em benefit} in this context is the number of new nodes included and the {\em cost} is the new link cost needed for creation of this candidate solution.

\section{Experimental Results}

The results of our experimental evaluation of the \textit{SWEEP} heuristic with substation location data of Arizona are presented here. The latitude-longitude locations of substations in Arizona were collected from the U.S. Dept. of Homeland Security website. 
The total number of substations in Arizona is 892, of which, 653 are operational. As PMUs are expensive, not all substations have them. It has been reported in \cite{baldwin1993power} 
that PMUs installed in 20\%-30\% of the substations are sufficient for full {\em observability}. 
We conduct our experiments with substation locations in Phoenix and Tucson, the two largest cities in Arizona. The number of substations in Phoenix and Tucson are 132 and 25, respectively. The results of the experiments (where R=1) are presented in Tables \ref{table1} and \ref{table2}. The data sets $DS_1$ through $DS_5$ correspond to differing percentages of the substations with PMUs. In Table \ref{table1}, $35\%$ next to $DS_1$ indicates that in that data set, $35\%$ of 132 substations (46) are assumed to have PMUs. The delay threshold value $\delta$ was varied from 100 to 250. 46 substation locations from 132 were selected randomly three times. These three data sets may be viewed as $DS_{1,1}, DS_{1,2}, DS_{1.3}$. The design costs with SWEEP, Optimal, and the ratio between the two 
were computed for these three data sets for a specific $\delta$ value, and the corresponding {\em average} and {\em standard deviation} are presented in Table \ref{table1}. The values for SWEEP, Optimal, Ratio for $DS_5(15\%)$ and $\delta = 100$ are (251.54, 18.53), (144.18, 3.92) and (1.74, 0.09) respectively. The other entries in tables \ref{table1} and \ref{table2} were similarly computed. As the number of substations in Tucson was fewer, we varied the percentage from $40\%$ to $60\%$, instead of $15\%$ to $35\%$ as was done for Phoenix. 
The ratios of the SWEEP to Optimal range from 1.5 to 3.44 for Phoenix dataset and from 1.2 to 1.39 for the Tucson dataset. The computational times for the optimal solution was multiple hours, whereas the SWEEP only took a few seconds. Accordingly, one can conclude that SWEEP is quite effective. Please note that some of the standard deviation values in the tables are zeroes. This is due to the fact that for some datasets, SWEEP produced degenerate solutions and the Optimal found only one solution after 24 hours of computation.

\section{Conclusion}

We have investigated a generalization of minimum delay constrained spanning trees: instead of considering trees, we considered subgraphs where removal of $R$ edges does \textit{not} disconnect any vertex from a designated root node. In particular, we focused on the special case where $R = 1$, which we refer to as the RDCMCCC problem, that is, removal of a single edge does not disconnect the edges from the source node while still satisfying the delay constraints. We have shown that even when $R = 1$ the problem is NP-complete. Despite the difficulty of the problem, we have presented a novel ILP formulation based on network flows, and shown that in some special cases, the resulting constraint matrix is likely unimodular. We have also created a polynomial time heuristic which has been used to solve the Euclidean version of RDCMCCC. 

In the future we hope to extend this work by considering other heuristics that may perform better. In addition, we hope that there may be fast approximation algorithms that grant sharp theoretical bounds on the cost of such structures.

\bibliographystyle{IEEEtran}
\bibliography{DRCN_Ref2}
\end{document}